\documentclass[lettersize,journal]{IEEEtran}
\usepackage{amsmath,amsfonts,amssymb}
\usepackage{amsthm}
\usepackage{algorithmic}
\usepackage{algorithm}
\usepackage{array}
\usepackage{multirow}
\usepackage[caption=false,font=normalsize,labelfont=sf,textfont=sf]{subfig}
\usepackage{textcomp}
\usepackage{stfloats}
\usepackage{url}
\usepackage{verbatim}
\usepackage{graphicx}
\usepackage[dvipsnames]{xcolor}
\usepackage{cite}
\newtheorem{proposition}{Proposition}

\begin{document}
\bstctlcite{BSTcontrol}

\title{Exposing Noise-Deficiency Signatures of Deep Learning-Based False Data Injection Attacks Through AC State Estimation}

\author{Yichen~Liu,~\IEEEmembership{Student~Member,~IEEE,}
Hongyu~Wu,~\IEEEmembership{Senior~Member,~IEEE,}
and~Bo~Liu,~\IEEEmembership{Member,~IEEE}
\thanks{This work has been submitted to the IEEE for possible publication. Copyright may be transferred without notice, after which this version may no longer be accessible.}
}

\maketitle

\begin{abstract}

Deep learning (DL)-based false data injection (FDI) attacks can resemble normal measurements and evade both conventional bad data detection (BDD) and measurement-space data-driven detectors. This paper identifies a previously overlooked signature of such attacks in state-estimation (SE) residual space. The residuals of reconstruction-based DL attacks exhibit deficient stochastic variation, manifested as abnormally small residual energy and reduced directional dispersion. We establish the conditions under which this signature emerges and explain why the attacks carrying it can evade conventional BDD. We show that SE can reveal residual-space differences obscured by operating-state variation in measurement space. We then explain the origin of the signature through the noise-suppression mechanism of DL attack generation. A generator can learn the reproducible structure of historical measurements but cannot reproduce stochastic measurement components that are unpredictable from its input. This signature enables label-free detection using a simple two-sided test on the conventional BDD statistic, while the full residual vector can also be exploited when labeled attack samples are available. Experiments on the IEEE 14-bus and 118-bus AC systems show that the label-free test detects up to 99\% of three representative DL attacks at a false-alarm rate comparable to that of BDD, which detects at most 5\%. Ablation studies further examine the effects of residual representation, decision rule, and measurement-noise magnitude and distribution.

\end{abstract}

\begin{IEEEkeywords}
Bad data detection, deep learning attack, false data injection attack, residual space, state estimation.
\end{IEEEkeywords}

\section{Introduction}

\IEEEPARstart{M}{odern} AC power systems rely on supervisory control and data acquisition (SCADA) systems and networked sensing technologies for monitoring and control \cite{zhang_smart_2021}. This connectivity expands the attack surface of state estimation (SE), whose estimates support contingency analysis, optimal power flow, and economic dispatch. False data injection (FDI) attacks can corrupt these estimates and mislead downstream decisions. This paper investigates an estimator-side defense that uses measurements and residuals already available in the SE pipeline, without requiring additional infrastructure or grid modifications such as secured communication channels, phasor measurement units, or active topology reconfiguration \cite{bi_graphical_2014,pei_pmu_2020,liu_optimal_2021}.

The standard estimator-side defense is residual-based bad data detection (BDD), developed within the classical weighted-least-squares (WLS) SE framework. BDD evaluates discrepancies between observed and model-predicted measurements, typically using a $\chi^2$ test on the weighted residual energy and a largest-normalized-residual (LNR) test on individual residual components \cite{monticelli_reliable_1983}. Detection thresholds are calibrated under the assumed noise model to control the false-alarm probability, and BDD raises an alarm when a residual statistic exceeds its threshold. Sensor failures, calibration errors, and isolated bad telemetry can inflate residuals and trigger detection. However, corruption that does not produce sufficiently large residual statistics can evade conventional BDD.

Liu et al. \cite{liu_false_2011} showed that an attacker with knowledge of the system Jacobian can construct FDI attacks that bypass BDD, commonly referred to as \emph{classical FDI attacks}. Subsequent work applied this construction to operational targets including price manipulation \cite{xie_false_2010}, automatic generation control \cite{tan_modeling_2017}, and multi-period dispatch \cite{zhang_can_2018}. Despite their stealthiness against BDD, these attacks can shift measurements away from the normal operating distribution. Measurement-space data-driven (DD) detectors exploit such distributional changes using supervised classifiers or unsupervised anomaly detection on raw measurements or their principal components \cite{esmalifalak_detecting_2017}, as well as deep models that capture spatial \cite{zhang_spatial-temporal_2022}, temporal \cite{zhao_lstm-based_2020}, or measurement reconstruction patterns \cite{sun_power_2022}. Thus, BDD targets residual-inflating corruption, whereas measurement-space DD detectors can identify residual-stealthy attacks that distort the measurement distribution.

Classical FDI attacks, however, face two practical limitations. First, their construction generally requires system information that may be difficult for an attacker to obtain, such as topology, line parameters, or sensor configuration, although partial-knowledge and sparse-access variants relax these requirements \cite{song_intelligent_2018,hao_sparse_2015}. Second, the measurement-distribution shifts they induce can be exploited by DD detectors. Advances in deep learning, particularly generative adversarial networks (GANs) \cite{goodfellow_gan_2014}, have opened another route, in which attacks are learned from historical SCADA measurements instead of being constructed from an explicit system model. We refer to the resulting attacks as \emph{deep learning (DL) attacks}. These attacks learn generation or reconstruction rules and seek to keep compromised measurements close to the normal-operation distribution monitored by measurement-space DD detectors \cite{liu_matrix-completion-based_2023,costilla-enriquez_attack_2023,liu_gan-based_2022,ahmadian_cyber_2018,zhao_controllable_2025,li_false_2024}. DL attacks thus aim to achieve stealthiness while reducing dependence on explicit system knowledge.

Early DL attacks pursued this goal primarily with GANs under DC models on 5- to 30-bus systems \cite{ahmadian_cyber_2018,mohammadpourfard_cgan_2020,shahriar_iattackgen_2021,liu_lsgan_2024}. However, directly generated attacks did not necessarily evade BDD. In \cite{shahriar_iattackgen_2021}, only about 55\% of GAN-generated attacks passed BDD, and an additional refinement procedure was needed to achieve near-complete stealthiness. Research subsequently expanded DL attacks from small systems with linear DC models to larger systems with nonlinear AC models.

This expansion introduces two additional challenges, the nonlinear AC measurement mapping and the increasing dimensionality of the measurement space. Recent methods have addressed these challenges on systems as large as the IEEE 118-bus system \cite{jiao_new_2021,bhattacharjee_deep_2022,costilla-enriquez_attack_2023}, often by conditioning attack generation on real-time measurements \cite{jiao_new_2021,costilla-enriquez_attack_2023} or state information \cite{bhattacharjee_deep_2022,rahman_adversarial_2024}. Such conditioning, however, makes attack construction dependent on real-time information and potentially sensitive to distribution shifts between training and deployment conditions. More recently, representation-learning GANs (RL-GANs) \cite{liu_representation_2026} combine an autoencoder with latent-space generation to learn compact representations of historical AC measurements, enabling attack generation on both IEEE 14-bus and 118-bus systems without real-time measurements as generator inputs.

Despite substantial advances in DL attack construction, defenses specifically targeting these attacks have received comparatively limited attention. Much of the existing literature has focused on demonstrating that DL attacks can evade conventional BDD and measurement-space DD detectors \cite{liu_matrix-completion-based_2023,costilla-enriquez_attack_2023,liu_representation_2026}. Several studies present experimental results showing unusually low BDD statistics for DL-generated attacks, yet the statistical implications of these results remain largely unexplored \cite{jiao_new_2021,rahman_adversarial_2024,liu_representation_2026,xiao_guaranteed_2025}. Evading existing detectors does not necessarily imply statistical indistinguishability from normal measurements in the SE residual space. It therefore remains unclear whether DL attacks exhibit systematic and exploitable statistical signatures, why these signatures arise from the attack-generation process, and whether they can support detection without labeled attack samples or prior knowledge of the attack generator.

This paper addresses these questions through the following contributions.

\begin{itemize}
    \item[1)] We identify a previously overlooked residual-space signature of DL attacks that evade conventional BDD and measurement-space DD detectors. Under the conditions established here, their SE residuals exhibit deficient stochastic variation, manifested as low residual energy and reduced directional dispersion. SE reveals this deficiency, which operating-state variation can obscure in measurement space. The signature enables statistical tests calibrated using only normal data, without labeled attacks or knowledge of the attack generator.

    \item[2)] We explain the statistical origin of this signature through conditional-mean reconstruction. An ideal squared-loss generator preserves predictable measurement components while suppressing stochastic variation that is unpredictable from its input. We show how this suppression propagates through SE and reduces residual variation and, when residual bias is small, residual energy. This explains why DL attacks can evade upper-tail BDD while remaining statistically distinguishable from normal measurements in residual space.

    \item[3)] We evaluate the signature on the IEEE 14-bus and 118-bus AC systems against three DL attacks that evade conventional BDD and a measurement-space DD detector trained on classical FDI attacks. The results validate the signature and demonstrate its detectability using a label-free two-sided residual-energy test and supervised classifiers on full residual vectors. Ablation studies examine the effects of residual representation, decision rule, and measurement-noise magnitude and distribution.

\end{itemize}

The remainder of this paper is organized as follows. Section~\ref{sec:preliminaries} reviews AC SE and BDD and defines the DL attacks and the threat model. Section~\ref{sec:defense} characterizes the residual-space signature, explains its origin, and shows how to exploit it. Section~\ref{sec:experiments} reports the experimental evaluation. Section~\ref{sec:conclusion} concludes the paper.

\section{System and Attack Model}
\label{sec:preliminaries}

\subsection{AC State Estimation}
\label{subsec:SE}
In modern AC power systems, SCADA-telemetered measurements are streamed to the control center, where a state estimator based on the AC power-flow equations recovers the operating state. Let $z \in \mathbb{R}^{m}$ denote the measurement vector, which collects the standard SCADA channels of bus active and reactive power injections $P_i, Q_i$, branch active and reactive power flows $P_{ij}, Q_{ij}$, and bus voltage magnitudes $|V_i|$, where $i$ indexes buses and $(i,j)$ indexes branches. Let $x \in \mathbb{R}^{n}$ denote the state vector. The two are related through the measurement model
\begin{equation}
    z = h(x) + e,
    \label{eq:meas-model}
\end{equation}
where $h(\cdot):\mathbb{R}^{n} \rightarrow \mathbb{R}^{m}$ is the nonlinear AC power-flow mapping, with coefficients fixed by the topology and line parameters, and $e \in \mathbb{R}^{m}$ is additive measurement noise modeled as zero-mean Gaussian with covariance $\Sigma = \mathrm{diag}(\sigma_1^2, \dots, \sigma_m^2)$. SE is solved as a WLS problem,
\begin{equation}
    \hat{x} = \arg\min_{x} \, [z - h(x)]^{\top}\, \Sigma^{-1}\, [z - h(x)],
    \label{eq:wls}
\end{equation}
which the nonlinearity of $h(\cdot)$ makes iterative, typically via Gauss--Newton. The estimated state $\hat{x}$ then feeds downstream energy management system (EMS) applications such as contingency analysis, optimal power flow, and economic dispatch.

\subsection{Bad Data Detection}
\label{subsec:BDD}
The post-estimation residual,
\begin{equation}
    r(\hat{x}) = z - h(\hat{x}),
    \label{eq:residual}
\end{equation}
captures the discrepancy between the observed measurement and its model-reconstructed value, and is the quantity on which BDD is built. BDD applies statistical tests to this residual to flag inconsistencies that the WLS fit cannot absorb into noise. The most widely used test is the $\chi^2$ test, which evaluates the weighted residual norm
\begin{equation}
    J(\hat{x}) = r(\hat{x})^{\top} \Sigma^{-1} r(\hat{x}).
    \label{eq:chi-stat}
\end{equation}
Under the Gaussian noise assumption, $J(\hat{x})$ approximately follows a $\chi^2$ distribution with $m - n$ degrees of freedom, so a snapshot is flagged whenever
\begin{equation}
    J(\hat{x}) > \tau, \quad \tau = \chi^{2}_{m-n,\,1-\alpha},
    \label{eq:bdd-rule}
\end{equation}
where $\alpha$ is the prescribed false-alarm probability.

\subsection{DL Attacks}
\label{subsec:attack-benchmarks}

A DL attack replaces the legitimate measurement vector with one synthesized by a learned model,
\begin{equation}
    z^{a} = G(u;\phi),
    \label{eq:dl-attack}
\end{equation}
where $G(\cdot;\phi)$ is trained on historical SCADA measurements of the target system and $u$ is its input. This differs from additive FDI, $z^{a} = z + a$, which perturbs the legitimate measurement.

\textit{Threat model.} The attacker is positioned between the substation remote terminal units and the control-center EMS and can replace the SCADA telemetry with synthesized values. It has historical SCADA records of the target system but no access to the system model, line parameters, or the operator's SE solver. Its goal is to bias the state estimate while evading BDD and any deployed measurement-space DD detector. We consider attacks generated as their original methods prescribe, and attackers who adapt to a residual-space detector are beyond the scope of this paper.

\textit{Representative DL attacks.} We study three such attacks, a measurement-completion attack based on a deep neural network (DNN) \cite{rahman_adversarial_2024}, a reconstruction attack based on an autoencoder (AE) \cite{costilla-enriquez_attack_2023}, and the RL-GAN attack \cite{liu_representation_2026}. The DNN maps the first $n_{\mathrm{bus}}$ measurement channels of $z_{t}$ to the full measurement vector and is trained with a squared reconstruction loss. The AE reconstructs $z_{t}$ through an encoder--decoder bottleneck with the same loss. The RL-GAN encodes $z_{t}$ with an autoencoder, trains a GAN in the latent space with an $L_{1}$ reconstruction term on the latent code, and decodes the nearest generated code back to measurement space. Although their architectures and training procedures differ, all three generate $z^{a}$ through a mapping trained with a squared reconstruction loss, and Section~\ref{subsec:origin} examines how this shared property produces the residual-space signature.

\section{Residual-Space Signature and Defense}
\label{sec:defense}

This section identifies the signature that DL attacks leave in SE residuals (Section~\ref{subsec:signature}), characterizes the conditions under which it emerges (Section~\ref{subsec:conditions}), explains its origin in DL-based attack generation (Section~\ref{subsec:origin}), and shows how it can be exploited for detection (Section~\ref{subsec:defense}).

\subsection{Residual-Space Signature}
\label{subsec:signature}

DL attacks are trained to synthesize measurements that resemble historical data. Several studies report unusually low BDD statistics for such attacks \cite{jiao_new_2021,rahman_adversarial_2024,liu_representation_2026,xiao_guaranteed_2025}. The attacks studied here exhibit a similar pattern. They rarely trigger BDD, while an existing measurement-space DD detector trained on classical FDI attacks detects them inconsistently (Section~\ref{subsec:evasion-matrix}). Their SE residuals, however, reveal a systematic departure from normal behavior. The BDD statistic $J$ falls well below the normal range (Fig.~\ref{fig:residual-norm}), and the residual covariance exhibits a substantially lower effective rank (Table~\ref{tab:signature}). These abnormally small and concentrated residuals reveal a deficiency in residual-space variation, which we term the \emph{residual-space signature} of DL attacks.

For the attacks studied here, the signature has two manifestations, abnormally low residual energy and reduced directional dispersion. The former is captured by the scalar BDD statistic $J$, while the latter is reflected in the covariance structure of the full residual vector $r$. Both provide complementary information for detection (Section~\ref{subsec:defense}).

\subsection{Conditions for the Signature}
\label{subsec:conditions}

\textit{Local residual model.} Let $T(z)=z-h(\hat{x}(z))$ denote the residual map, where $\hat{x}(z)$ is the WLS estimate in \eqref{eq:wls}. Around an operating point $x_0$, linearize $h(x_0+\delta)\approx h(x_0)+H\delta$, where $H=\partial h/\partial x|_{x_0}$ is the Jacobian. This local linearization serves only to characterize residual covariances analytically, and the experiments in Section~\ref{sec:experiments} use nonlinear WLS. Assume $\Sigma\succ0$ and that $H$ has full column rank. The WLS problem then has the local solution $\hat{\delta}=K(z-h(x_0))$, with gain $K=(H^\top\Sigma^{-1}H)^{-1}H^\top\Sigma^{-1}$, yielding
\begin{equation}
    r \approx S\big(z-h(x_0)\big), \qquad S=I-HK.
    \label{eq:linear-residual-map}
\end{equation}
The matrix $S$ is the $\Sigma^{-1}$-orthogonal projector with kernel $\mathrm{Col}(H)$ and therefore satisfies
\begin{equation}
    SH=0.
    \label{eq:SH0}
\end{equation}
Thus, every first-order measurement variation attributable to a change in system state is removed from the residual.

\textit{Normal and attacked residuals.} Under normal operation, $z=h(x_0+\delta)+e$ with $\mathbb{E}[e]=0$ and $\mathrm{Cov}(e)=\Sigma$. Equations~\eqref{eq:linear-residual-map} and \eqref{eq:SH0} give
\begin{equation}
    r_N\approx Se, \qquad \mathrm{Cov}(r_N)=S\Sigma S^\top.
    \label{eq:normal-residual}
\end{equation}
The state variation $\delta$, which dominates the variation in raw measurements, therefore does not enter $r_N$. Similarly, an attacked measurement $z^a=G(c)$ can be decomposed locally as $z^a\approx h(x_0)+H\delta_a+\tilde{e}_a$, where $\tilde{e}_a$, with mean $\mu_a$ and covariance $\Sigma_a$, is the remaining variation in the synthesized measurement, including retained meter-noise variation and reconstruction errors. It is not meter noise. This decomposition is not unique, but since $SH=0$, only $S\tilde{e}_a$ enters the residual, and it does not depend on how $\delta_a$ is chosen. Hence,
\begin{equation}
    r_a\approx S\tilde{e}_a, \qquad \mathrm{Cov}(r_a)=S\Sigma_aS^\top.
    \label{eq:attack-residual}
\end{equation}

\textit{Noise-deficiency condition.} Since $SH=0$, the signature is determined by the remainder $\tilde{e}_a$ of the generator output. An attack is \emph{noise-deficient along} a direction $v\in\mathbb{R}^m$ with $S^\top v\neq0$ if it satisfies the noise-deficiency condition
\begin{equation}
    v^\top S\Sigma_aS^\top v
    <
    v^\top S\Sigma S^\top v,
    \label{eq:condition}
\end{equation}
i.e., if the remainder of the attack exhibits less variation than the meter noise along a residual-sensitive direction. Under the local model, the two sides of \eqref{eq:condition} equal $\mathrm{Var}(v^\top r_a)$ and $\mathrm{Var}(v^\top r_N)$, so the condition is necessary and sufficient for reduced residual variation along $v$. The stronger covariance ordering $S\Sigma_aS^\top\preceq S\Sigma S^\top$, with strict inequality along at least one direction, establishes an overall reduction in residual covariance. Proposition~\ref{prop:bdd-order} further shows that, under its Gaussianity and zero-bias assumptions, this ordering is sufficient to keep the upper-tail BDD detection probability at or below that of normal measurements.

\begin{proposition}
\label{prop:exposure}
Under the local model \eqref{eq:linear-residual-map}, suppose that, for some choice of $\delta_a$, the attack is noise-deficient along $v$ and reproduces the normal operating-state variation, such that $\mathrm{Cov}(\delta_a)=\mathrm{Cov}(\delta)=\Sigma_x$. Assume further that the state and stochastic terms are uncorrelated for both normal and attacked measurements. Then $\mathrm{Var}(v^\top r_a)<\mathrm{Var}(v^\top r_N)$. In measurement space, for any direction $w$,
\begin{equation}
    \frac{\mathrm{Var}(w^\top z^a)}
         {\mathrm{Var}(w^\top z)}
    =
    \frac{s_w+w^\top\Sigma_aw}
         {s_w+w^\top\Sigma w},
    \qquad
    s_w=w^\top H\Sigma_xH^\top w,
    \label{eq:z-masking}
\end{equation}
which approaches $1$ as $s_w$ dominates both $w^\top\Sigma w$ and $w^\top\Sigma_aw$.
\end{proposition}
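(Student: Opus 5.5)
The proof has two parts, and both reduce to second-moment bookkeeping under the local model \eqref{eq:linear-residual-map}. For the residual-space claim I take the chosen decomposition $z^a\approx h(x_0)+H\delta_a+\tilde{e}_a$ and apply $S$. By \eqref{eq:SH0}, $r_a\approx S\tilde{e}_a$, and this does not depend on $\delta_a$, as noted before \eqref{eq:attack-residual}. Hence $\mathrm{Var}(v^\top r_a)=v^\top S\Sigma_aS^\top v$. In the same way \eqref{eq:normal-residual} gives $\mathrm{Var}(v^\top r_N)=v^\top S\Sigma S^\top v$. The noise-deficiency condition \eqref{eq:condition} along $v$ is exactly the strict inequality between these two numbers, so the first conclusion follows directly. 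The state-covariance and uncorrelatedness assumptions are not needed for this part.

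For the measurement-space identity \eqref{eq:z-masking}, I project onto an arbitrary $w$. Normal data give $w^\top z\approx w^\top h(x_0)+w^\top H\delta+w^\top e$, and attacked data give $w^\top z^a\approx w^\top h(x_0)+w^\top H\delta_a+w^\top\tilde{e}_a$. The constant term contributes no variance. By the assumed uncorrelatedness of $\delta$ with $e$ and of $\delta_a$ with $\tilde{e}_a$, the cross terms vanish. This gives $\mathrm{Var}(w^\top z)=w^\top H\Sigma_xH^\top w+w^\top\Sigma w$ and $\mathrm{Var}(w^\top z^a)=w^\top H\,\mathrm{Cov}(\delta_a)H^\top w+w^\top\Sigma_a w$. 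Substituting $\mathrm{Cov}(\delta_a)=\Sigma_x$ makes both state terms equal to $s_w$, which yields the ratio.

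For the limit I write the ratio as $(1+a/s_w)/(1+b/s_w)$ with $a=w^\top\Sigma_aw\ge0$ and $b=w^\top\Sigma w>0$, and let $a/s_w,b/s_w\to0$. The ratio then tends to $1$. One can also bound it explicitly as $|{\rm ratio}-1|=|a-b|/(s_w+b)\le\max(a,b)/s_w$.

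The only delicate step is keeping the non-uniqueness of the decomposition consistent across the two parts. The hypothesis fixes one particular $\delta_a$ with $\mathrm{Cov}(\delta_a)=\Sigma_x$ and $\tilde{e}_a$ uncorrelated with it, and $\Sigma_a$ must be read as the covariance of that specific remainder. I will state explicitly that the same $\delta_a$ is used in both computations. The residual result is invariant to this choice because $SH=0$, while the measurement-space result relies on the particular choice. Everything else is routine linear algebra and holds only to first order, consistent with the ``$\approx$'' in the local model.
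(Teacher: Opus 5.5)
Your proposal is correct and takes essentially the same route as the paper. The residual inequality follows directly from \eqref{eq:normal-residual}, \eqref{eq:attack-residual}, and \eqref{eq:condition}, and the ratio \eqref{eq:z-masking} comes from linearizing $z$ and $z^a$, then using $\mathrm{Cov}(\delta_a)=\Sigma_x$ and uncorrelatedness to remove the cross terms. Your explicit bound $|a-b|/(s_w+b)\le\max(a,b)/s_w$ and your note that $\Sigma_a$ is the covariance of the remainder for the particular $\delta_a$ fixed in the hypothesis are valid refinements that the paper leaves implicit.
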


\textit{Proof:} The residual-space inequality follows from \eqref{eq:normal-residual}, \eqref{eq:attack-residual}, and \eqref{eq:condition}. For the measurement-space ratio, linearize the measurements as $z=h(x_0)+H\delta+e$ and $z^a=h(x_0)+H\delta_a+\tilde{e}_a$. The stated covariance relation and uncorrelatedness assumptions yield \eqref{eq:z-masking}. \hfill$\blacksquare$

Proposition~\ref{prop:exposure} explains why the residual-space signature can be obscured in measurement space. Under the stated assumptions, the attack reproduces the operating-state variation $H\Sigma_xH^\top$, which can dominate the difference in noise covariance. Consequently, the variance ratio in \eqref{eq:z-masking} approaches $1$ along directions dominated by state variation, although this does not imply identical measurement distributions. In residual space, \eqref{eq:SH0} removes the first-order state variation and exposes the noise deficiency. This role of SE is analogous to the innovation in Kalman-style estimators \cite{netto_robust_2018} and structured residuals in model-based fault diagnosis \cite{gertler_structured_2000}. Whether DL attacks satisfy \eqref{eq:condition} is examined next.

\subsection{Origin of the Signature}
\label{subsec:origin}

Section~\ref{subsec:conditions} establishes the conditions under which the residual-space signature arises. We now explain why DL attacks can satisfy condition~\eqref{eq:condition}. The key distinction is that normal measurements contain both state-dependent variation and stochastic meter noise, whereas a reconstruction-based generator may fail to reproduce meter noise that is unpredictable from its input. We characterize this effect using the population-optimal solution to squared-loss reconstruction.

Each DL attack in Section~\ref{subsec:attack-benchmarks} generates measurements from an effective input $c$ through a mapping trained with squared reconstruction loss. Here $c$ is the input $u$ for the DNN and the low-dimensional latent code for the AE and RL-GAN. For the RL-GAN, the mapping is the autoencoder decoder, with adversarial training selecting the code $c$. Under squared loss, the population-optimal generator over all measurable mappings, evaluated on the training input distribution, is the conditional mean:
\begin{equation}
    G^\star(c)
    =
    \mathbb{E}[z\mid c]
    =
    \mathbb{E}[h(x)\mid c]+\mathbb{E}[e\mid c].
    \label{eq:cond-mean}
\end{equation}
Under the local approximation $h(x)\approx h(x_0)+H\delta$ with the same Jacobian $H$, $\mathbb{E}[h(x)\mid c]\approx h(x_0)+H\,\mathbb{E}[\delta\mid c]$, and \eqref{eq:SH0} eliminates this state-dependent component, leaving
\begin{equation}
    r_a^\star
    \approx
    S\big(H\,\mathbb{E}[\delta\mid c]+\mathbb{E}[e\mid c]\big)
    =
    S\,\mathbb{E}[e\mid c].
    \label{eq:attack-residual-cm}
\end{equation}
The law of total covariance gives
\begin{equation}
    \Sigma
    =
    \mathrm{Cov}\big(\mathbb{E}[e\mid c]\big)
    +
    \mathbb{E}\big[\mathrm{Cov}(e\mid c)\big].
    \label{eq:total-variance}
\end{equation}
Consequently, within the local model,
\begin{equation}
    \mathrm{Cov}(r_a^\star)
    =
    S\Sigma S^\top
    -
    S\,\mathbb{E}\big[\mathrm{Cov}(e\mid c)\big]\,S^\top
    \preceq
    \mathrm{Cov}(r_N).
    \label{eq:deficit}
\end{equation}
The deficit term $\mathbb{E}[\mathrm{Cov}(e\mid c)]$ is the meter-noise covariance that cannot be predicted from $c$, which conditional-mean reconstruction therefore suppresses. Thus, condition~\eqref{eq:condition} follows from the reconstruction mechanism along every direction $v$ satisfying $v^\top S\,\mathbb{E}[\mathrm{Cov}(e\mid c)]\,S^\top v>0$.

The three attacks exhibit this mechanism to different extents. For the DNN, $c$ contains only the first $n_{\mathrm{bus}}$ measurement channels. With the Gaussian noise in \eqref{eq:meas-model} independent across channels and of the state, the noise in the remaining channels cannot be predicted from $c$, so condition~\eqref{eq:condition} holds along every residual direction to which this noise contributes. For the AE, $c$ is a low-dimensional code derived from the full noisy measurement. Compression to this code may suppress meter-noise variation, but compression alone does not guarantee this effect, which is evaluated empirically (Section~\ref{subsec:residual-mechanism}). The RL-GAN may exhibit similar behavior through its autoencoder decoder, but its adversarially selected codes can depart from the training distribution. Thus, the conditional-mean analysis explains the mechanism behind the observed signature and does not guarantee it for every attack.

A trained generator generally differs from the population optimum. With $G=G^\star+\Delta$, the attacked residual becomes $r_a\approx r_a^\star+S\Delta(c)$. The projected reconstruction error $S\Delta$ can shift the residual mean and add residual covariance, which may offset the deficiency induced by noise suppression. The deficiency is therefore a tendency under the stated assumptions, not a guarantee for every trained generator. Fig.~\ref{fig:epoch} illustrates this behavior. Early in training, reconstruction errors place $J$ above the normal range, and as reconstruction improves, $J$ decreases and eventually falls below it.

The resulting deficiency also explains why upper-tail BDD can miss these attacks under additional assumptions.

\begin{proposition}
\label{prop:bdd-order}
Under the local residual model \eqref{eq:linear-residual-map}, suppose that $S\mu_a=0$, $S\Sigma_aS^\top\preceq S\Sigma S^\top$, and both normal and attacked residuals are Gaussian. Then $\Pr(J_a>\tau)\le\Pr(J_N>\tau)$ for every $\tau\ge0$. If the threshold $\tau$ is calibrated such that $\Pr(J_N>\tau)=\alpha$, the attack detection probability under BDD is at most $\alpha$.
\end{proposition}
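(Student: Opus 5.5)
Both residuals are zero-mean Gaussian vectors whose weighted energies are ordered, and the plan is to reduce this to a coupling argument. Under the local model, $r_N\approx Se$ with $e\sim\mathcal N(0,\Sigma)$, so $r_N\sim\mathcal N(0,A)$ with $A=S\Sigma S^\top$. The attacked residual is $r_a\approx S\tilde e_a$. Its mean is $S\mu_a=0$ by hypothesis, and it is Gaussian by assumption, so $r_a\sim\mathcal N(0,B)$ with $B=S\Sigma_aS^\top\preceq A$. The statistics are $J_N=r_N^\top\Sigma^{-1}r_N$ and $J_a=r_a^\top\Sigma^{-1}r_a$ as in \eqref{eq:chi-stat}. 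Since only distributions matter, it suffices to construct, on a common probability space, copies of the two residuals with $J_a\le J_N$ almost surely. Stochastic dominance, $\Pr(J_a>\tau)\le\Pr(J_N>\tau)$ for all $\tau\ge0$, then follows immediately.

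For the coupling, set $D=A-B\succeq0$. Take $y\sim\mathcal N(0,B)$ and an independent $w\sim\mathcal N(0,D)$, and define $r_a'=y$ and $r_N'=y+w$. Then $r_N'\sim\mathcal N(0,A)$ and $r_a'\sim\mathcal N(0,B)$. Next, expand
\[
J_N'=(y+w)^\top\Sigma^{-1}(y+w)=J_a'+2y^\top\Sigma^{-1}w+w^\top\Sigma^{-1}w .
\]
The cross term is not nonnegative pointwise, so almost-sure domination fails with this naive coupling. To handle it, I would instead condition on $y$ and argue in distribution. Given $y$, the variable $J_N'$ is $\|\Sigma^{-1/2}(y+w)\|^2$ with $w$ symmetric. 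Anderson's inequality applies because the set $\{u:\|\Sigma^{-1/2}u\|^2\le\tau\}$ is convex and symmetric and the density of $w$ is symmetric and unimodal (log-concave). It gives $\Pr(\|\Sigma^{-1/2}(y+w)\|^2\le\tau\mid y)\le\Pr(\|\Sigma^{-1/2}w\|^2\le\tau)$. However, this compares with $w$, not with $y$, so I would reverse the roles. Write $r_N'=r_a'+w$ with $r_a'$ and $w$ independent, and apply Anderson's inequality conditionally on $w$ to the symmetric unimodal law of $r_a'$. The resulting inequality points the wrong way for our purpose, so the cleaner route is the direct form of Anderson's theorem. If $A\succeq B$ and both laws are centered Gaussian, then $\mathcal N(0,B)(C)\ge\mathcal N(0,A)(C)$ for every symmetric convex set $C$.

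I will apply this with $C_\tau=\{u:u^\top\Sigma^{-1}u\le\tau\}$, an ellipsoid that is convex and symmetric. It gives $\Pr(J_a\le\tau)\ge\Pr(J_N\le\tau)$, which is the claim after taking complements. The proof of this corollary is the step above done correctly. Write $r_N=r_a+w$ with $w$ independent, and condition on $w$. By Anderson's inequality applied to the symmetric unimodal density of $r_a$, $\Pr(r_a+w\in C_\tau\mid w)\le\Pr(r_a\in C_\tau)$. Averaging over $w$ yields $\Pr(r_N\in C_\tau)\le\Pr(r_a\in C_\tau)$.

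One technical point needs care. $A$ and $B$ are singular because $S$ has rank $m-n$. Anderson's inequality needs a unimodal density, so I would either restrict to the range of $S$ or add $\varepsilon I$ to both covariances and let $\varepsilon\to0$, using continuity of the Gaussian measure of the closed ellipsoid. The final statement follows directly: if $\tau$ is calibrated so that $\Pr(J_N>\tau)=\alpha$, then $\Pr(J_a>\tau)\le\alpha$. The main obstacle is getting the direction of Anderson's inequality right and handling the degenerate covariance. An alternative that avoids Anderson is to whiten by $\Sigma^{-1/2}$ and compare the eigenvalues of $\Sigma^{-1/2}B\Sigma^{-1/2}\preceq\Sigma^{-1/2}A\Sigma^{-1/2}$. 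That route requires a monotonicity result for weighted sums of $\chi^2_1$ variables, which is less elementary, so I would use Anderson.
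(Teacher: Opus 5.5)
Your final argument is correct and is exactly the paper's proof. You write $r_N\stackrel{d}{=}r_a+w$ with $w\sim\mathcal N(0,S\Sigma S^\top-S\Sigma_aS^\top)$ independent of $r_a$, observe that $\{u:u^\top\Sigma^{-1}u\le\tau\}$ is symmetric and convex, and apply Anderson's inequality conditionally on $w$ to the law of $r_a$; your $\varepsilon I$ regularization also makes explicit a degeneracy the paper passes over, and of your two fixes it is the one that works, since restricting to $\mathrm{Col}(S)$ alone would not suffice because $S\Sigma_aS^\top$ can be singular there.

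Do drop the earlier remark that conditioning on $w$ ``points the wrong way'': that conditional step is precisely the right one, and the abandoned almost-sure coupling is unnecessary.
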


\textit{Proof:} Since $S\Sigma_aS^\top\preceq S\Sigma S^\top$, the normal residual has the same distribution as $r_a+y$, where $y$ is an independent, possibly degenerate, zero-mean Gaussian vector with covariance $S\Sigma S^\top-S\Sigma_aS^\top$. The event $J\leq\tau$ defines a symmetric convex set in residual space. Anderson's inequality \cite{anderson_integral_1955} therefore gives $\Pr(J_N\leq\tau)\leq\Pr(J_a\leq\tau)$ for every $\tau$. \hfill$\blacksquare$

Proposition~\ref{prop:bdd-order} establishes that, under its assumptions, the detection probability of upper-tail BDD cannot exceed the normal false-alarm rate. An upper-tail test on $J$ cannot flag abnormally small residual energy. However, the ideal residual $r_a^\star$, although zero-mean when $\mathbb{E}[e]=0$, need not be Gaussian, so this result does not automatically apply to reconstruction-based attacks.

Two qualifications are worth noting. First, residual bias may offset the covariance reduction and increase the expected BDD statistic. Second, for an unbiased attack, uniform covariance reduction lowers the expected residual energy without changing the effective rank. The reduced effective ranks in Table~\ref{tab:signature} therefore reflect nonuniform changes in residual covariance.

\subsection{Signature-Based Defense}
\label{subsec:defense}

The residual-space signature provides a basis for detecting DL attacks that evade upper-tail BDD. It can be exploited through two representations, the scalar statistic $J$ in \eqref{eq:chi-stat}, which summarizes residual energy, and the full residual vector $r\in\mathbb{R}^m$, which retains directional information. Table~\ref{tab:defense-grid} classifies the detectors by representation and by training data. A label-free detector uses normal data only, whereas a supervised detector also requires labeled attacks. As Fig.~\ref{fig:residual-norm} illustrates, the reduction in $J$ is pronounced enough for a simple statistical test, so we use simple detectors instead of developing more complex models.

\begin{table}[!t]
\caption{Residual-space detectors by representation and training data.}
\label{tab:defense-grid}
\centering
\renewcommand{\arraystretch}{1.15}
\footnotesize
\begin{tabular*}{\linewidth}{@{\extracolsep{\fill}}l l l}
\hline\hline
 & Label-free (normal data) & Supervised (labeled attacks) \\
\hline
Vector $r$ & Open (future work) & Classifier on $r$ \\
Statistic $J$ & Two-sided test \eqref{eq:two-sided} & Threshold learned on $J$ \\
\hline\hline
\end{tabular*}
\end{table}

Previous studies have also used SE residuals for FDI detection, including distributional-similarity tests on weighted least absolute value residuals \cite{cheng_highly_2022}, neural networks using residual elements or features reduced by principal component analysis \cite{tabakhpour_neural_2019}, and adaptive Mahalanobis scoring of regression residuals \cite{nagaraj_state_2021}. Unlike the present approach, these methods do not explicitly target the stochastic variation deficiency of DL attacks.

\textit{Residual-energy detection.} On $J$, we use a label-free two-sided test. Let $J_1,\ldots,J_{N_v}$ be the residual-energy statistics of $N_v$ normal validation snapshots. A snapshot is flagged if
\begin{equation}
    J<\tau_{\mathrm{lo}}
    \quad\text{or}\quad
    J>\tau_{\mathrm{hi}},
    \label{eq:two-sided}
\end{equation}
where $\tau_{\mathrm{lo}}$ and $\tau_{\mathrm{hi}}$ are the empirical $\alpha/2$ and $1-\alpha/2$ quantiles of $\{J_i\}$, respectively. The test uses the same statistic and the same target false-alarm rate $\alpha$ as BDD but differs in its rejection region and its calibration. BDD places the whole rate in the upper tail of the nominal $\chi^2$ distribution, whereas the two-sided test splits it equally between the two tails of the empirical normal distribution. The upper threshold therefore retains sensitivity to excessive residual energy, while the lower threshold captures the identified signature. Its advantage is expected for attacks concentrated in the lower tail, not for every attack distribution. Empirical calibration avoids reliance on the nominal $\chi^2$ distribution, which may be inaccurate under nonlinear SE or nonideal measurement noise, and the realized false-alarm rates are verified on normal test data in Section~\ref{sec:experiments}.

\textit{Residual-vector detection.} The full vector $r$ retains directional and cross-channel information that $J$ discards. When labeled attacks are available, a supervised classifier on $r$ can learn this information, and Section~\ref{sec:experiments} evaluates such a classifier. A label-free detector on $r$ would need to respond to concentrated residuals as well as to conventional outliers. Standard outlier scores, which flag samples far from the normal data, may instead rank concentrated attack residuals as more normal than normal observations, and Section~\ref{subsec:ablation} examines this behavior for three such detectors. A label-free vector detector that is sensitive to concentration is left for future work. Both evaluated detectors use only the measurement model, noise covariance, and SE solver already required by BDD. The two-sided test needs only normal data, whereas the classifier also requires labeled attacks.

\section{Experiments}
\label{sec:experiments}

This section validates the identified residual-space signature and evaluates its detectability on the IEEE 14-bus and 118-bus AC systems against three DL attacks: DNN \cite{rahman_adversarial_2024}, AE \cite{costilla-enriquez_attack_2023}, and RL-GAN \cite{liu_representation_2026}. Section~\ref{subsec:exp-setup} presents the experimental setup and evaluation protocol. Section~\ref{subsec:evasion-matrix} evaluates the attacks against existing defenses. Section~\ref{subsec:residual-mechanism} demonstrates the label-free detectability of the signature and verifies and quantifies it. Section~\ref{subsec:ablation} examines the effects of residual representation, decision rule, and measurement-noise conditions.

\subsection{Experimental Setup}
\label{subsec:exp-setup}

\textit{Systems and measurement models.} All experiments use the IEEE 14-bus and 118-bus AC test systems. The measurement vector $z$ comprises randomly selected SCADA channels, including bus active and reactive power injections $\{P_i,Q_i\}$, branch active and reactive power flows $\{P_{ij},Q_{ij}\}$, and bus voltage magnitudes $\{|V_i|\}$. The number of selected channels is determined by a fixed measurement-redundancy coefficient, with the same measurement configuration maintained across snapshots, attacks, and random seeds. The resulting measurement and state dimensions are $(m,n)=(68,27)$ for the 14-bus system and $(591,235)$ for the 118-bus system.

\textit{Normal-operation data.} Normal operating conditions are generated using measured load profiles from the ERCOT 15-minute interval archive. The profiles are scaled to each system's nominal aggregate demand and proportionally distributed across load buses. For each system, $8000$ operating snapshots are generated by sampling the load profiles, solving the AC power flow with MATPOWER \cite{zimmerman_matpower_2020}, and forming $z_t=h(x_t)+e_t$, where $x_t=(|V_t|,\theta_t)$ and $e_t\sim\mathcal{N}(0,\Sigma)$, with each $\sigma_i$ set to $1\%$ of the nominal magnitude of channel $i$. These snapshots support attack-model training, residual calibration, and detector evaluation. Model hyperparameters are reported in Table~\ref{tab:attack-config}.

\textit{Data partitioning.} For each system, the normal snapshots are divided into $5600$ training, $1200$ validation, and $1200$ test samples, and each attack set provides a training pool and $1200$ test samples. The test set is reserved for final evaluation, and the validation set is used only for threshold calibration and hyperparameter tuning. Detector training labels are sampled only from the training partition.

\begin{table}[!t]
\caption{Attack model configurations.}
\label{tab:attack-config}
\centering
\renewcommand{\arraystretch}{1.15}
\footnotesize
\setlength{\tabcolsep}{4pt}
\begin{tabular}{l l c c}
\hline\hline
Attack & Component & 14-bus & 118-bus \\
\hline
\multirow{3}{*}{DNN}
 & Input channels & $14$ & $118$ \\
 & Hidden layers & \multicolumn{2}{c}{$300, 600, 600, 300, 100$} \\
 & Learning rate / batch & \multicolumn{2}{c}{$10^{-6}$ / $1000$} \\
\hline
\multirow{4}{*}{AE}
 & Encoder hidden & \multicolumn{2}{c}{$150, 300, 100$} \\
 & Decoder hidden & \multicolumn{2}{c}{$100, 300, 150$} \\
 & Latent dimension & $1$ & $1$ \\
 & Learning rate / batch & \multicolumn{2}{c}{$10^{-5}$ / $1000$} \\
\hline
\multirow{4}{*}{RL-GAN}
 & AE latent dimension & $1$ & $2$ \\
 & Generator / discriminator & \multicolumn{2}{c}{$200, 500, 300, 100$} \\
 & Learning rate / batch & \multicolumn{2}{c}{$10^{-4}$ / $1000$} \\
 & Loss & \multicolumn{2}{c}{Binary cross-entropy $+\,0.5\,L_{1}$} \\
\hline\hline
\end{tabular}
\end{table}

\begin{table}[!t]
\caption{Classifier hyperparameters.}
\label{tab:classifier-config}
\centering
\renewcommand{\arraystretch}{1.2}
\footnotesize
\setlength{\tabcolsep}{4pt}
\begin{tabular}{l l l}
\hline\hline
Classifier & 14-bus & 118-bus \\
\hline
SVM (RBF) & $C{=}0.5$, $\gamma{=}$scale & $C{=}0.1$, $\gamma{=}$scale \\
XGBoost & $4$ trees, depth $2$, lr $0.01$ & $4$ trees, depth $2$, lr $0.01$ \\
MLP & $(128,256,128)$, lr $10^{-3}$ & $(1000,2000,1000)$, lr $10^{-3}$ \\
\hline\hline
\end{tabular}
\end{table}

\textit{Detectors.} Two existing defenses, BDD and a classical-FDI-trained DD detector, are used to assess whether the DL attacks evade current detection, and the two-sided test is used to exploit the residual-space signature. BDD applies the $\chi^{2}$ test \eqref{eq:bdd-rule} at $\alpha=0.05$, with thresholds $\chi^{2}_{41,0.95}=56.94$ on 14-bus and $\chi^{2}_{356,0.95}=401.0$ on 118-bus. The classical-FDI-trained DD detector \cite{liu_false_2011} is trained on normal snapshots and on classical FDI attacks, which bias one state by $5\%$ on 14-bus and two states by $5\%$ on 118-bus. It is then evaluated on each DL attack, none of which it has seen during training.

\textit{Ablation detectors.} The ablations in Section~\ref{subsec:ablation} add two kinds of detectors. Three standard outlier detectors, an Isolation Forest (IF), a one-class support vector machine (SVM), and an autoencoder, are applied to $r$. They are trained on the $5600$ normal training snapshots and raise an alarm above the 95th percentile of their normal validation scores. The supervised detectors are binary classifiers on $z$ or on $r$, trained separately for each system and attack with $100$ labels per class drawn from the training pool. Each uses a radial basis function (RBF)-kernel SVM, gradient-boosted decision trees (XGBoost), or a three-layer multilayer perceptron (MLP), with hyperparameters fixed in advance (Table~\ref{tab:classifier-config}). Each supervised result is the mean over five seeds that resample the labels and reinitialize the classifiers. The classical-FDI-trained DD detector uses the same three classifier types.

\textit{Evaluation metrics.} Detection is reported by the attack detection rate (Det), and the false-alarm rates on normal test snapshots are stated in the text. Supervised detectors are reported by test $F_{1}$ with attacks as the positive class.

\subsection{Attack Effectiveness Against Existing Defenses}
\label{subsec:evasion-matrix}

\begin{table}[!t]
\caption{Attack detection by BDD and the classical-FDI-trained DD detector.}
\label{tab:evasion-matrix}
\centering
\renewcommand{\arraystretch}{1.05}
\footnotesize
\begin{tabular*}{\linewidth}{@{\extracolsep{\fill}}l l c c c c}
\hline\hline
\multirow{2}{*}{Sys.} & \multirow{2}{*}{Attack} & BDD & \multicolumn{3}{c}{Classical-FDI-trained DD ($F_{1}$)} \\
\cline{4-6}
& & Det & MLP & SVM & XGB \\
\hline
\multirow{4}{*}{\rotatebox[origin=c]{90}{14-bus}}
 & \emph{Classical FDI (ref.)} & --  & \emph{1.00} & \emph{1.00} & \emph{0.99} \\
 & DNN    & \textbf{0.03} & \textbf{0.00} & 0.63 & 0.59 \\
 & AE     & \textbf{0.01} & \textbf{0.00} & 0.68 & 0.52 \\
 & RL-GAN & \textbf{0.05} & \textbf{0.00} & \textbf{0.00} & 0.56 \\
\hline
\multirow{4}{*}{\rotatebox[origin=c]{90}{118-bus}}
 & \emph{Classical FDI (ref.)} & --  & \emph{0.99} & \emph{1.00} & \emph{0.96} \\
 & DNN    & \textbf{0.04} & \textbf{0.00} & \textbf{0.00} & 0.69 \\
 & AE     & \textbf{0.05} & \textbf{0.00} & \textbf{0.00} & 0.60 \\
 & RL-GAN & \textbf{0.02} & \textbf{0.00} & \textbf{0.00} & 0.56 \\
\hline\hline
\end{tabular*}
\end{table}

\textit{BDD.} BDD is nearly blind to all three DL attacks. Its detection rate stays between $0.01$ and $0.05$ on both systems (Table~\ref{tab:evasion-matrix}), at or near its false-alarm rate on normal measurements ($0.045$ on 14-bus and $0.044$ on 118-bus).

\begin{figure}[!t]
\centering
\includegraphics[width=0.85\linewidth]{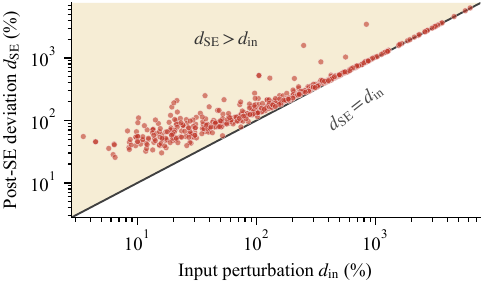}
\caption{Post-SE deviation versus input perturbation for an AE attack on 14-bus.}
\label{fig:impact}
\end{figure}

\textit{Attack impact.} Fig.~\ref{fig:impact} shows an example AE attack on 14-bus. Let $\hat{z}=h(\hat{x}(z))$ and $\hat{z}^{a}=h(\hat{x}(z^{a}))$ denote the SE-estimated measurements for the normal and attacked measurements. The input perturbation $d_{\mathrm{in}}$ and the post-SE deviation $d_{\mathrm{SE}}$ are the averages of $|z^{a}_{i}-z_{i}|/|\hat{z}_{i}|$ and $|\hat{z}^{a}_{i}-\hat{z}_{i}|/|\hat{z}_{i}|$ over all measurement channels $i$. Every snapshot has $d_{\mathrm{SE}}>d_{\mathrm{in}}$, with a median ratio of $1.38$, so the attack shifts the SE output at least as much as it perturbs the measurements.

\textit{Classical-FDI-trained DD.} This detector does not transfer from classical FDI to DL attacks. It reaches $F_{1} \geq 0.96$ on the classical FDI it was trained for, but its MLP scores $0.00$ on all six attacks, and its SVM scores $0.00$ on every 118-bus attack and on 14-bus RL-GAN (Table~\ref{tab:evasion-matrix}). XGBoost retains only partial detection, with $F_{1}$ between $0.52$ and $0.69$. The existing defenses therefore cover the DL attacks inconsistently. This raises the question of whether the attacks leave a signature that neither defense tests for, which Section~\ref{subsec:residual-mechanism} examines in residual space.

\subsection{Residual-Space Signature and Detectability}
\label{subsec:residual-mechanism}

\begin{table}[!t]
\caption{Detection rates of BDD and the two-sided test.}
\label{tab:twosided}
\centering
\renewcommand{\arraystretch}{1.10}
\footnotesize
\begin{tabular*}{\linewidth}{@{\extracolsep{\fill}}l l c c}
\hline\hline
& & BDD & Two-sided \\
Sys. & Attack & Det & Det \\
\hline
\multirow{3}{*}{14-bus} & DNN & 0.03 & 0.93 \\
 & AE & 0.01 & 0.89 \\
 & RL-GAN & 0.05 & 0.92 \\
\hline
\multirow{3}{*}{118-bus} & DNN & 0.04 & 0.99 \\
 & AE & 0.05 & 0.99 \\
 & RL-GAN & 0.02 & 0.99 \\
\hline\hline
\end{tabular*}
\end{table}

\subsubsection{Detectability}
We first evaluate whether the identified signature enables label-free detection using the two-sided test \eqref{eq:two-sided}. The test flags residual energy $J$ outside the 2.5th and 97.5th percentiles of normal $J$, calibrated from 1200 normal validation samples per system. Because it operates on the conventional BDD statistic without attack labels or a learned detector, its performance directly reflects the detectability of the signature. Table~\ref{tab:twosided} compares its detection rate with that of conventional BDD on the held-out test set. The test also adds negligible computation. On one thread of an Intel Core i9-13900KF CPU, it takes less than $1~\mu$s per snapshot, compared with $45.5$~ms for the 118-bus WLS that BDD already requires.

All six attack configurations are detectable without attack labels. The two-sided test achieves detection rates of $0.89$--$0.99$, with false-alarm rates of $0.040$ and $0.041$ on the 14-bus and 118-bus systems, respectively. Conventional BDD, which applies a one-sided nominal $\chi^2$ threshold to the same statistic, detects at most $0.05$ of the attacks. The lower threshold $\tau_{\mathrm{lo}}$ accounts for most detections, flagging $0.87$--$0.98$ of the attacks, whereas the upper threshold flags at most $0.05$, comparable to the BDD detection rate. These results demonstrate that the residual-energy statistic already contains a strong attack signature that conventional upper-tail BDD does not exploit.

\begin{figure*}[!t]
\centering
\includegraphics[width=0.85\linewidth]{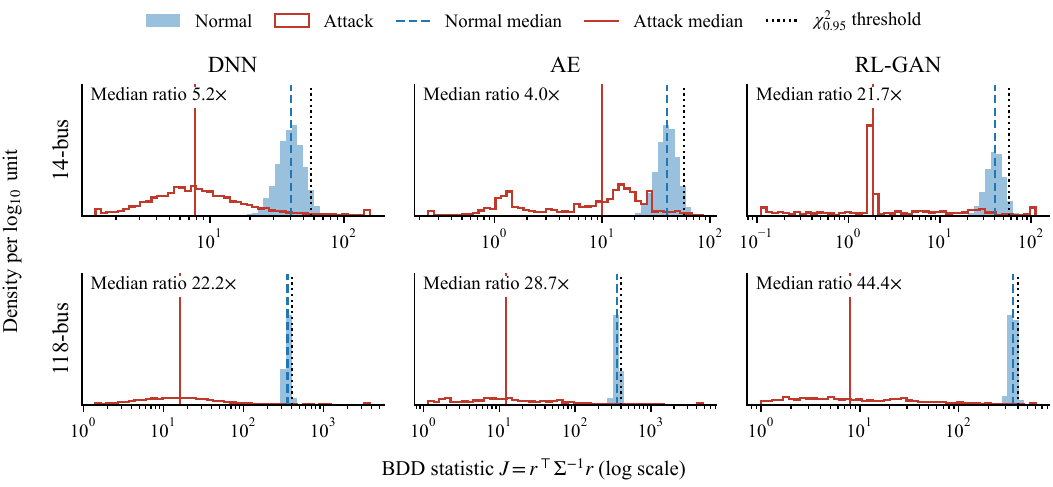}
\caption{Distribution of the BDD statistic $J$.}
\label{fig:residual-norm}
\end{figure*}

\subsubsection{Signature}
Fig.~\ref{fig:residual-norm} illustrates the statistical departure underlying this detection advantage. It compares the distributions of $J$ for normal and attacked measurements across all six configurations, with the conventional BDD threshold $\chi^2_{0.95}$ indicated.

The attacked measurements consistently produce smaller residual energy than normal measurements. In every configuration, the attack distribution is concentrated toward lower $J$, with its median reduced by a factor of $4.0$--$44.4$ (Fig.~\ref{fig:residual-norm}). This downward departure contrasts with the elevated residual energy targeted by conventional BDD and explains why attacks that pass its upper-tail test remain detectable through the lower tail.

\begin{figure*}[!t]
\centering
\includegraphics[width=0.85\linewidth]{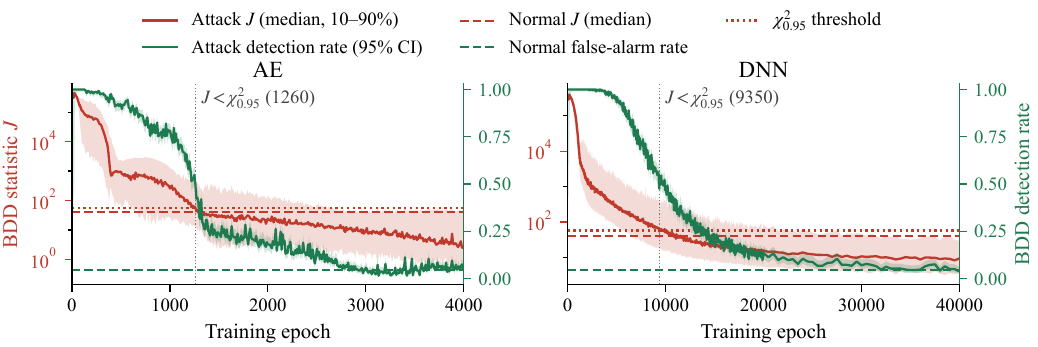}
\caption{BDD statistic $J$ and BDD detection rate of the attacks during training on 14-bus.}
\label{fig:epoch}
\end{figure*}

\subsubsection{Training Dynamics}
We examine how the residual-space signature develops during AE and DNN training on the 14-bus system and whether early stopping allows an attack to evade both BDD and the two-sided test. Fig.~\ref{fig:epoch} tracks residual energy and BDD detection across epochs. Table~\ref{tab:epoch} compares both tests at five representative stages spanning the threshold-crossing transition, using 400 attacked samples per stage.

As training progresses, attack residual energy shifts from abnormally high to abnormally low values, while BDD detection declines (Fig.~\ref{fig:epoch}). The median attack $J$ crosses the BDD threshold at epochs $1260$ (AE) and $9350$ (DNN), although BDD still detects $0.50$ and $0.48$ of the attacks, respectively, due to residual-energy dispersion. At epochs $2830$ (AE) and $35000$ (DNN), BDD detection falls to $0.04$ and $0.05$, reaching the normal false-alarm rate. The corresponding median $J$ values are $10.5$ and $9.7$, well below the normal 5th percentile of $27.0$.

Two-sided detection remains effective at both ends of training, with a temporary decline near the threshold crossing (Table~\ref{tab:epoch}). Initially, it detects $0.86$ (AE) and $0.93$ (DNN) of the attacks, predominantly through upper-tail anomalies. Detection falls to minima of $0.61$ (AE) and $0.66$ (DNN) after the crossing, then rises as lower-tail anomalies dominate. When BDD detection reaches the normal false-alarm rate, two-sided detection reaches $0.75$ (AE) and $0.89$ (DNN), increasing to $0.86$ and $0.90$ at the last epochs in Fig.~\ref{fig:epoch}. The observed trajectories reveal no early-stopping window for evading both tests, since two-sided detection remains at least $0.73$ whenever BDD detection is at or below $0.06$, consistent with the mechanism in Section~\ref{subsec:origin}.

These dynamics also explain the attack selection in Tables~\ref{tab:evasion-matrix} and \ref{tab:twosided}. Achieving BDD evasion requires continued training until detection approaches the false-alarm rate, by which point most attack $J$ values fall below the normal range. Early stopping within this range is impractical because it is narrow ($25.0\le J\le59.7$ on 14-bus) and attack $J$ values are widely dispersed. Even at minimum two-sided detection, only $0.39$ (AE) and $0.34$ (DNN) of attacks fall within this range, while BDD still detects $0.18$ and $0.37$, respectively. The 14-bus attacks in these tables come from separate training runs at the later epochs $4700$ (AE) and $88000$--$98900$ (DNN), explaining minor differences from Table~\ref{tab:epoch}. The 118-bus attacks use the first checkpoint at which the validation BDD pass rate reaches that of normal measurements.

\begin{table}[!t]
\caption{Detection rates of the 14-bus attacks during training.}
\label{tab:epoch}
\centering
\renewcommand{\arraystretch}{1.10}
\footnotesize
\setlength{\tabcolsep}{2pt}
\begin{tabular*}{\linewidth}{@{\extracolsep{\fill}}l l c c c c}
\hline\hline
Attack & Milestone & Epoch & Median $J$ & BDD & Two-sided \\
\hline
\multicolumn{2}{l}{Normal (ref.)} & -- & 40.2 & 0.04 & 0.04 \\
\hline
\multirow{5}{*}{AE} & Early training & 1000 & 256.0 & 0.78 & 0.86 \\
 & Median $J<\chi^2_{0.95}$ & 1260 & 56.9 & 0.50 & 0.69 \\
 & Median $J<$ normal median & 1300 & 34.9 & 0.33 & 0.69 \\
 & BDD at false-alarm rate & 2830 & 10.5 & 0.04 & 0.75 \\
 & Last epoch in Fig.~\ref{fig:epoch} & 4000 & 2.6 & 0.05 & 0.86 \\
\hline
\multirow{5}{*}{DNN} & Early training & 5000 & 286.7 & 0.93 & 0.93 \\
 & Median $J<\chi^2_{0.95}$ & 9350 & 53.3 & 0.48 & 0.69 \\
 & Median $J<$ normal median & 10850 & 38.2 & 0.36 & 0.68 \\
 & BDD at false-alarm rate & 35000 & 9.7 & 0.05 & 0.89 \\
 & Last epoch in Fig.~\ref{fig:epoch} & 40000 & 9.3 & 0.04 & 0.90 \\
\hline\hline
\end{tabular*}
\end{table}

\begin{table}[!t]
\caption{Residual energy and effective rank of normal and attacked residuals.}
\label{tab:signature}
\centering
\renewcommand{\arraystretch}{1.10}
\footnotesize
\setlength{\tabcolsep}{2pt}
\begin{tabular*}{\linewidth}{@{\extracolsep{\fill}}l l c c c c c}
\hline\hline
& & \multicolumn{2}{c}{Median $J$} & & Below & Eff. rank \\
Sys. & Attack & Normal & Attack & Ratio & $\tau_{\mathrm{lo}}$ & Normal / Attack \\
\hline
\multirow{3}{*}{14-bus} & DNN & 40.2 & 7.7 & 5.2$\times$ & 0.90 & 7.7 / 5.1 \\
 & AE & 40.2 & 10.0 & 4.0$\times$ & 0.88 & 7.7 / 2.3 \\
 & RL-GAN & 40.2 & 1.9 & 21.7$\times$ & 0.87 & 7.7 / 1.6 \\
\hline
\multirow{3}{*}{118-bus} & DNN & 354.7 & 16.0 & 22.2$\times$ & 0.96 & 69.1 / 1.9 \\
 & AE & 354.7 & 12.4 & 28.7$\times$ & 0.94 & 69.1 / 2.8 \\
 & RL-GAN & 354.7 & 8.0 & 44.4$\times$ & 0.98 & 69.1 / 1.2 \\
\hline\hline
\end{tabular*}
\end{table}

\subsubsection{Quantification}
Table~\ref{tab:signature} quantifies the residual-space signature through residual energy and effective rank. Across all six configurations, the median attack $J$ is $4.0$ to $44.4$ times smaller than the normal median, and $0.87$--$0.98$ of the attacked test measurements fall below the lower threshold $\tau_{\mathrm{lo}}$ of the two-sided test, confirming a consistent lower-tail departure.

The attacked residuals also exhibit reduced directional dispersion. Their effective rank, $(\sum_i\lambda_i)^2/\sum_i\lambda_i^2$ computed from the eigenvalues $\lambda_i$ of the unweighted residual covariance, counts the directions that effectively carry residual variation, a threshold-free analogue of the number of principal components, and equals $k$ when the variation is spread equally over $k$ directions. It decreases from $7.7$ for normal measurements to $1.6$--$5.1$ on the 14-bus system, and from $69.1$ to $1.2$--$2.8$ on the 118-bus system. Together, these results establish two complementary manifestations of the signature, reduced residual energy and reduced directional dispersion, both consistent with the stochastic variation deficiency characterized in Section~\ref{sec:defense}.

\subsection{Ablation Studies}
\label{subsec:ablation}

All ablations use the held-out test set of 1200 normal and 1200 attacked measurements per configuration, with the detectors of Section~\ref{subsec:exp-setup}.

\subsubsection{Representation}
\label{subsubsec:abl-repr}
\begin{figure}[!t]
\centering
\includegraphics[width=0.85\linewidth]{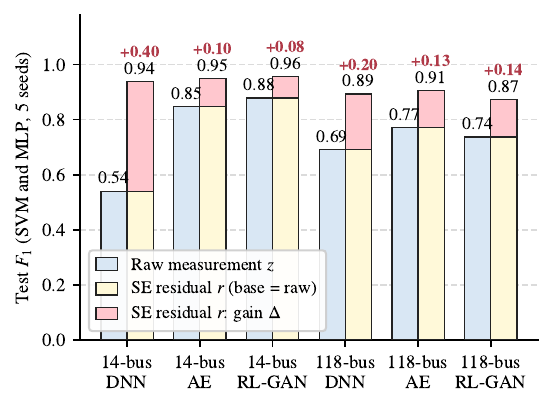}
\caption{Mean test $F_{1}$ of the SVM and MLP on $z$ and $r$ with 100 labels per class.}
\label{fig:s2-vs-s3}
\end{figure}

\begin{table*}[!t]
\caption{Per-classifier test $F_{1}$ on $z$ and $r$ with 100 labels per class.}
\label{tab:s2-vs-s3-detail-100}
\centering
\renewcommand{\arraystretch}{1.10}
\footnotesize
\begin{tabular}{l l | c c c | c c c | c c c}
\hline\hline
\multirow{2}{*}{Sys.} & \multirow{2}{*}{Attack}
 & \multicolumn{3}{c|}{MLP} & \multicolumn{3}{c|}{SVM} & \multicolumn{3}{c}{XGBoost} \\
& & $z$ & $r$ & $\Delta$ & $z$ & $r$ & $\Delta$ & $z$ & $r$ & $\Delta$ \\
\hline
\multirow{3}{*}{14-bus}
 & DNN    & $0.572_{\pm.020}$ & $0.892_{\pm.021}$ & $+.32$ & $0.509_{\pm.055}$ & $0.983_{\pm.004}$ & $+.47$ & $0.747_{\pm.022}$ & $0.932_{\pm.017}$ & $+.18$ \\
 & AE     & $0.958_{\pm.010}$ & $0.906_{\pm.015}$ & $-.05$ & $0.736_{\pm.038}$ & $0.993_{\pm.006}$ & $+.26$ & $0.967_{\pm.010}$ & $0.955_{\pm.025}$ & $-.01$ \\
 & RL-GAN & $0.953_{\pm.011}$ & $0.928_{\pm.016}$ & $-.03$ & $0.803_{\pm.012}$ & $0.987_{\pm.004}$ & $+.18$ & $0.917_{\pm.048}$ & $0.974_{\pm.017}$ & $+.06$ \\
\hline
\multirow{3}{*}{118-bus}
 & DNN    & $0.765_{\pm.014}$ & $0.791_{\pm.125}$ & $+.03$ & $0.616_{\pm.088}$ & $0.995_{\pm.001}$ & $+.38$ & $0.989_{\pm.009}$ & $0.976_{\pm.010}$ & $-.01$ \\
 & AE     & $0.826_{\pm.022}$ & $0.822_{\pm.125}$ & $+.00$ & $0.718_{\pm.068}$ & $0.990_{\pm.002}$ & $+.27$ & $0.987_{\pm.010}$ & $0.967_{\pm.027}$ & $-.02$ \\
 & RL-GAN & $0.836_{\pm.017}$ & $0.755_{\pm.088}$ & $-.08$ & $0.638_{\pm.076}$ & $0.993_{\pm.002}$ & $+.35$ & $0.960_{\pm.023}$ & $0.944_{\pm.054}$ & $-.02$ \\
\hline\hline
\end{tabular}
\end{table*}

We train the same classifiers on raw measurements $z$ and on SE residuals $r$ under the same label budget. Fig.~\ref{fig:s2-vs-s3} reports the mean test $F_1$ of SVM and MLP classifiers using 100 labels per class, averaged over five random seeds. Table~\ref{tab:s2-vs-s3-detail-100} provides classifier-specific results, including XGBoost.

All six attacks are detectable from residuals. Averaged over the SVM and MLP (Fig.~\ref{fig:s2-vs-s3}), the $F_1$ on $r$ exceeds that on $z$ in every configuration, by $0.08$ to $0.40$, rising from $0.54$--$0.88$ on raw measurements to $0.87$--$0.96$ on residuals. Averaged over the six configurations (Table~\ref{tab:s2-vs-s3-detail-100}), $r$ also improves every classifier. The $F_1$ increases from $0.67$ to $0.99$ for SVM, from $0.82$ to $0.85$ for MLP, and from $0.93$ to $0.96$ for XGBoost.

Raw measurements also carry attack information, but its detectability depends on the classifier and the attack. On $z$, XGBoost reaches at least $0.960$ in four of the six configurations, whereas SVM stays between $0.51$ and $0.80$, and no classifier detects all six attacks reliably. On $r$, a standard SVM reaches at least $0.983$ for every attack and system, so the residual-space signature lets a simple classifier achieve uniformly high detection.

\subsubsection{Decision Rule}
\label{subsubsec:abl-rule}

We examine which decision rules can exploit the identified signature. The label-free two-sided test detects $0.89$--$0.99$ of the attacks (Table~\ref{tab:twosided}), demonstrating that the signature can be exploited without labeled attacks or attack-specific training. Supervised residual classifiers also achieve strong performance, with mean $F_1$ scores of $0.87$--$0.96$ (Fig.~\ref{fig:s2-vs-s3}).

In contrast, standard outlier detectors fail to exploit the signature in the full residual vector, even though the supervised classifiers detect the attacks from the same vector. The IF detector applied to $r$ detects at most $0.005$ of the attacks, while one-class SVM and autoencoder detectors detect at most $0.06$, comparable to the $5\%$ of normal snapshots flagged by each detector. At their operating thresholds, these detectors therefore flag attacked measurements no more often than normal ones, apart from a marginal excess for the one-class SVM on the 14-bus RL-GAN attack, as anticipated in Section~\ref{subsec:defense}. Statistical distinguishability alone thus does not ensure detection, and the decision rule must match the direction of the signature.

\begin{table}[!t]
\caption{Detection rates under different measurement noise.}
\label{tab:noise}
\centering
\renewcommand{\arraystretch}{1.10}
\footnotesize
\begin{tabular*}{\linewidth}{@{\extracolsep{\fill}}l c c c}
\hline\hline
& Two-sided & Supervised & BDD \\
Noise & Det & Det & Det \\
\hline
\multicolumn{4}{l}{\textit{Noise magnitude $\sigma$ (Gaussian)}} \\
0.5\% & 0.977 & 0.989 & 0.007 \\
1\% & 0.998 & 0.997 & 0.000 \\
1.5\% & 0.997 & 0.991 & 0.000 \\
2\% & 1.000 & 0.999 & 0.000 \\
3\% & 1.000 & 0.998 & 0.000 \\
\hline
\multicolumn{4}{l}{\textit{Noise shape ($\sigma = 1\%$)}} \\
Gaussian (ref.) & 0.998 & 0.997 & 0.000 \\
Laplace & 0.853 & 0.992 & 0.046 \\
Student-$t_{3}$ & 0.756 & 0.989 & 0.043 \\
Uniform & 0.882 & 0.994 & 0.055 \\
Correlated & 0.704 & 0.988 & 0.050 \\
\hline\hline
\end{tabular*}
\end{table}

\subsubsection{Noise Magnitude and Shape}
\label{subsubsec:abl-noise}

We assess the robustness of the two-sided test and of a supervised detector on the full residual vector $r$ to measurement noise using the 14-bus AE attack, retrained under each noise condition (Table~\ref{tab:noise}). The supervised detection rate is the mean of the SVM and MLP classifiers in Fig.~\ref{fig:s2-vs-s3}.

\textit{Noise magnitude.} Under Gaussian noise levels of $0.5\%$--$3\%$, the test achieves detection rates of $0.977$--$1.000$ at a false-alarm rate of $0.052$. Its performance is therefore stable across the tested noise magnitudes.

\textit{Noise distribution.} We consider Laplace, Student-$t_3$, uniform, and correlated noise at a nominal level of $1\%$, with the Gaussian $1\%$ case of the magnitude sweep as the reference. These conditions are generated as a separate dataset, with the attack retrained for each. The two-sided test achieves detection rates of $0.704$--$0.882$, with false-alarm rates of $0.043$--$0.064$. Heavy-tailed and correlated noise yield the lowest detection rates, but the test still detects at least $0.70$ of the attacks under all four conditions. In contrast, conventional BDD exhibits a false-alarm rate of $0.220$ under correlated noise, reflecting the limitation of its nominal $\chi^2$ threshold under non-Gaussian or correlated noise. Calibrating the two-sided thresholds on normal data from each noise condition keeps the false-alarm rate near the target across the tested distributions.

The supervised detector on $r$, trained with only $100$ labeled samples per class, achieves detection rates of $0.988$--$0.999$ across all tested noise conditions, albeit with higher false-alarm rates of $0.062$--$0.098$. The two-sided test detects at least $0.977$ of the attacks under Gaussian noise of every tested magnitude, but its detection rate falls to $0.704$--$0.882$ under non-Gaussian and correlated noise. The supervised detector therefore complements the two-sided test when even a small number of labeled attacks is available. It recovers the detections lost under these noise shapes, at the cost of a moderately higher false-alarm rate. Both residual-space detectors still far exceed BDD, which detects at most $0.055$ of the attacks.

\section{Conclusion}
\label{sec:conclusion}

This paper identifies a previously overlooked residual-space signature of reconstruction-based DL attacks in AC state estimation. We establish the conditions under which it emerges, a deficiency in residual-space variation manifested as reduced residual energy and directional dispersion. We trace its origin to conditional-mean reconstruction, which suppresses unpredictable meter-noise components that SE exposes by removing operating-state variation. When residual bias is small, this deficiency can reduce BDD detection to or below its normal-data false-alarm rate. We demonstrate that the signature supports both label-free detection using normal data alone and supervised detection using full residual vectors.

Experiments on the IEEE 14-bus and 118-bus systems with DNN, AE, and RL-GAN attacks consistently validate these findings. While BDD detects at most $5\%$ of the attacks, the label-free two-sided test detects up to $99\%$ at about the same false-alarm rate on normal measurements. Ablation studies further examine the effects of residual representation, decision rule, and measurement-noise conditions. Future work will investigate adaptive attackers that explicitly restore normal residual statistics, develop label-free detectors on the full residual vector, and extend the analysis to changing topologies.

\section*{Acknowledgments}
This material is based upon work supported by the US National Science Foundation under Grant No. 2146156. The authors used AI tools (OpenAI ChatGPT, Anthropic Claude, and OpenAI Codex) to assist with writing, language editing, and figure and table preparation. All technical content was verified by the authors.

\bibliographystyle{IEEEtran}
\bibliography{references}

\vfill

\end{document}